\newtheorem{theorem}{Theorem}
\newtheorem{lemma}{Lemma}
\theoremstyle{definition}
\newtheorem{fact}{Fact}
\newcommand{\occ}{\mathsf{occ}}
\newcommand{\MUS}{\mathsf{MUS}}
\newcommand{\mus[1]}{\mathrm{M}_{#1}}
\title{On the number of MUSs crossing a position}
\date{}
\author[1]{Hiroto~Fujimaru}
\author[2]{Takuya~Mieno}
\author[3]{Shunsuke~Inenaga}
\affil[1]{Department of Information Science and Technology, Kyushu University}
  \affil[ ]{\texttt{fujimaru.hiroto.134@s.kyushu-u.ac.jp}}
\affil[2]{Department of Computer and Network Engineering, University~of~Electro-Communications}
  \affil[ ]{\texttt{tmieno@uec.ac.jp}}
\affil[3]{Department of Informatics, Kyushu University}
 \affil[ ]{\texttt {inenaga.shunsuke.380@m.kyushu-u.ac.jp}}
\begin{document}
\maketitle

\begin{abstract}
  A string $w$ is said to be a \emph{minimal unique substring} (\emph{MUS}) of a string $T$ if $w$ occurs exactly once in $T$,
  and any proper substring of $w$ occurs at least twice in $T$.
  It is known that the number of MUSs in a string $T$ of length $n$
  is at most $n$,
  and that the set $\MUS(T)$ of all MUSs in $T$ can be computed in $O(n)$ time~[Ilie and Smyth, 2011].
  Let $\MUS(T,i)$ denote the set of MUSs 
  that contain a position $i$ in a string $T$.
  In this short paper, we present matching $\Theta(\sqrt{n})$ upper and lower bounds
  for the number $|\MUS(T,i)|$ of MUSs containing a position $i$ in a string $T$ of length $n$.
\end{abstract}

\section{Introduction}\label{sec:intro}

A string $w$ is said to be a \emph{minimal unique substring} (\emph{MUS}) of a string $T$ if $w$ occurs exactly once in $T$, and any proper substring of $w$ occurs at least twice in $T$.
Let $\MUS(T)$ denote the set of MUSs in $T$.
It is known~\cite{IlieS11} that the number $|\MUS(T)|$ of MUSs
in any string $T$ of length $n$ is at most $n$,
and that this bound is tight (e.g. consider a string consisting of $n$ distinct characters).
This number $|\MUS(T)|$ can also be bounded by other parameters
that relate to compression:
For any string $T$ whose \emph{run-length encoding} (\emph{RLE}) size is $m$, $|\MUS(T)| \leq 2m-1$ holds~\cite{MienoIBT16}.
Also, if the number of edges of the \emph{compact directed acyclic word graph} (\emph{CDAWG})~\cite{BlumerBHME87} of a string $T$ is $e$, then $|\MUS(T)| \leq e$ holds~\cite{InenagaMAFF24}.
Computing $\MUS(T)$ is important preprocessing for
the \emph{shortest unique substring} (\emph{SUS}) queries~\cite{AbedinKT20},
which are motivated by bioinformatics applications~\cite{HauboldPMW05,PeiWY13}.

Since each element $w \in \MUS(T)$ can be represented by a unique pair $(i,j)$
such that $w = T[i..j]$,
the set $\MUS(T)$ can be represented by $O(n \log n)$ bits of space,
or in $O(\min\{m,e\} \log n)$ bits of space.
Ilie and Smyth~\cite{IlieS11} showed how to compute
$\MUS(T)$ in $O(n)$ time with $O(n\log n)$ bits of working space.
There are also other space-efficient algorithms for computing $\MUS(T)$~\cite{MienoIBT16,BelazzouguiCGPR15,MienoKNIBT20,NishimotoT21,InenagaMAFF24}.

In this paper, we are interested in the following:
``How many MUSs in $\MUS(T)$ can contain the same position in $T$?''
Namely, we evaluate the worst-case bounds for the number $|\MUS(T,i)| = |\{T[k..j] \in \MUS(T) \mid k \leq i \leq j\}|$ of MUSs that contain position $i$ in $T$.
We prove that $|\MUS(T,i)| = O(\sqrt{n})$ holds for any string $T$ of $n$ and any position $i$~($1 \leq i \leq n$), and present a family of strings for which $|\MUS(T,i)| = \Omega(\sqrt{n})$ holds. Hence our bounds for $|\MUS(T,i)|$ are tight.

While the new combinatorial properties of MUSs presented in this paper
are interesting in their own right,
this study is also well motivated for applications including
design of a dynamic data structure
for storing the set $\MUS(T)$ of MUSs.
Namely, our results can be seen as the first step toward a representation of MUSs allowing for sublinear-time updates.
Another example of our motivation is to analyze the \emph{sensitivity}~\cite{AkagiFI23}
of MUSs,
which is the worst-case increase in the number $|\MUS(T)|$ of MUSs
after performing a single-character edit operation in $T$.
Our $\Omega(\sqrt{n})$ lower bound instance for $|\MUS(T,i)|$
immediately leads to an $\Omega(\sqrt{n})$ lower bound for the sensitivity of $|\MUS(T)|$,
and our $O(\sqrt{n})$ upper bound for $|\MUS(T,i)|$ suggests that
the sensitivity of $|\MUS(T)|$ is also $O(\sqrt{n})$.

\section{Preliminaries}

 Let $\Sigma$ be an alphabet.
An element of $\Sigma$ is called a character.
An element of $\Sigma^\ast$ is called a string.
The length of a string $T$ is denoted by $|T|$.
The empty string $\varepsilon$ is the string of length 0.
For a string $T = xyz$, $x$, $y$, and $z$ are called
a \emph{prefix}, \emph{substring}, and \emph{suffix} of $T$, respectively.
For a string $T$ of length $n$,
$T[i]$ denotes the $i$th character of $T$
for $1 \leq i \leq n$,
and $T[i..j]$ denotes
the substring of $T$ that begins at position $i$ and ends at position $j$
for $1 \leq i \leq j \leq n$.
For convenience, let $T[i..j] = \varepsilon$ for $i > j$.
We say that string $w$ \emph{occurs} in a string $T$
iff $w$ is a substring of $T$.
An interval $[i.. j]$ is said to be an \emph{occurrence} of $w$ in $T$
if $w = T[i.. j]$.
We may identify an occurrence $[i.. j]$ of $w$ with the string $w = T[i.. j]$
if there is no confusion.
Let $\occ_T(w)$ denote
the number of occurrences of $w$ in $T$.
For convenience, let $\occ_T(\varepsilon) = |T|+1$.
For substring $T[i.. j]$ of $T$ and position $1 \le p \le |T|$,
we say that $T[i.. j]$ contains $p$ if $i \le p \le j$.
An integer $p \geq 1$ is said to be a \emph{period}
of a string $T$ if $T[i] = T[i+p]$ holds for all $1 \leq i \leq n-p$.
We use the following fact, which can be shown with the \emph{periodicity lemma}~\cite{fine1965uniqueness}:
\begin{fact}\label{fact:threeoverlap}
  If a string $S$ occurs at three positions $i, j$, and $k$ in $T$ with $i < j < k \le i+|S|-1$,
  then $\gcd(j-i, k-j)$ is a period of $T[i.. k+|S|-1]$, and hence of $S$.
  Thus, $T[j-1] = T[k-1]$ holds.
\end{fact}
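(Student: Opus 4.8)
The plan is to deduce both assertions from the Fine and Wilf periodicity lemma~\cite{fine1965uniqueness}, reading off periods from the three pairwise overlapping occurrences and then lifting a $\gcd$ period to the whole interval. Write $\ell = |S|$, $p_1 = j-i$, and $p_2 = k-j$, and set $U = T[i..k+\ell-1]$; the hypothesis $k \le i+\ell-1$ is exactly $p_1+p_2 = k-i \le \ell-1 < \ell$. First I would record that a pair of overlapping occurrences of $S$ forces a period: comparing $T[i..i+\ell-1]=T[j..j+\ell-1]$ character by character shows $p_1$ is a period of $T[i..j+\ell-1]$, hence of its substring $S$; likewise $p_2$ is a period of $S$. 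Crucially, the outermost pair $(i,k)$ also overlaps, so the same reasoning makes $p_1+p_2 = k-i$ a period of the entire string $U$.

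Next I would apply the periodicity lemma to $S$. Since $S$ has both periods $p_1$ and $p_2$ and $|S| = \ell > p_1+p_2 \ge p_1+p_2-\gcd(p_1,p_2)$, the lemma yields that $d = \gcd(p_1,p_2)$ is a period of $S$. The main obstacle is then to upgrade this from $S$ to the strictly longer string $U$, because $U$ is not assumed to have either $p_1$ or $p_2$ as a period. Here I would use that $U$ does have period $p_1+p_2$ together with $d \mid (p_1+p_2)$: because $|S| = \ell > p_1+p_2$, the factor $S$ of $U$ meets every residue class modulo $p_1+p_2$, so every position of $U$ can be transported by the period $p_1+p_2$ into $S$; applying period $d$ inside $S$ and transporting back (using $d \mid (p_1+p_2)$) shows that any two positions of $U$ congruent modulo $d$ carry the same character. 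Hence $d$ is a period of $U$, which is the first assertion.

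Finally, the character equality follows at once. Both $j-1$ and $k-1$ lie in $[i..k+\ell-1]$, since $i \le j-1$ and $k-1 \le k+\ell-1$, and they differ by $(k-1)-(j-1) = p_2$, a multiple of $d$. Chaining the period-$d$ equalities of $U$ across this gap gives $T[j-1] = T[k-1]$, completing the proof.
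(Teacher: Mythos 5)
Your proof is correct and follows exactly the route the paper intends: the paper states this fact without a written proof, noting only that it "can be shown with the periodicity lemma," and you supply precisely that argument (pairwise overlaps give periods $j-i$, $k-j$, and $k-i$; Fine and Wilf yields the $\gcd$ period on $S$; and the transport via the period $k-i$, which divides into residue classes compatible with $d$, lifts it to all of $T[i..k+|S|-1]$). The final character equality $T[j-1]=T[k-1]$ then follows since $(k-1)-(j-1)=k-j$ is a multiple of the $\gcd$, as you say.
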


A substring $w$ of string $T$
is called a \emph{unique} substring in $T$ if $\occ_T(w) = 1$,
and it is called a \emph{repeat} in $T$ if $\occ_T(w) \geq 2$.
A unique substring $w$ in $T$
is called a \emph{minimal unique substring} (\emph{MUS}) in $T$
if $\occ_T(w[1..|w|-1]) \geq 2$ and $\occ_T(w[2..|w|]) \geq 2$.
Let $\MUS(T)$ denote the set of MUSs in string $T$.
Let $\MUS(T,i) = \{T[k..j] \in \MUS(T) \mid k \leq i \leq j\}$ denote the set of MUSs that contain position $i$ in $T$.
By definition, MUSs do not nest, namely,
for two distinct MUSs $T[i..j], T[i'..j'] \in \MUS(T)$,
$[i.. j]$ is not a subinterval of $[i'.. j']$, and vice versa.

\section{Bounds for maximum of $|\MUS(T,i)|$}
In this section, we show the following:
\begin{theorem} \label{theo:upper_bound_crossing_MUS}
  For any string $T$ of length $n$ and any position $i$ in $T$, 
  $|\MUS(T,i)| \in O(\sqrt{n})$ holds.
  Also, this upper bound is tight.
\end{theorem}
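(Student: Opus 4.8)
The plan is to fix the position $i$ and control the MUSs crossing it through their endpoint structure, and then to split the count into a short and a long regime with the threshold $\sqrt{n}$.

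First I would record the \emph{staircase} structure of $\MUS(T,i)$. If $T[k..j]$ and $T[k..j']$ with $j<j'$ were both MUSs, then $[k.. j]$ would be a subinterval of $[k.. j']$, contradicting the non-nesting property; hence distinct MUSs crossing $i$ have distinct left endpoints, and symmetrically distinct right endpoints. Moreover, for two crossing MUSs with $k<k'$ we must have $j<j'$, since otherwise $[k'.. j']$ would nest inside $[k.. j]$. Writing $\MUS(T,i)=\{M_1,\dots,M_t\}$ sorted by left endpoint with $M_s=T[k_s..j_s]$ of length $\ell_s$, we obtain $k_1<\cdots<k_t\le i\le j_1<\cdots<j_t$; equivalently the left extents $i-k_s$ strictly decrease and the right extents $j_s-i$ strictly increase. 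The goal is then to bound $t$ by $O(\sqrt{n})$.

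For the \emph{short} MUSs, those with $\ell_s\le\sqrt{n}$, I would use a window argument: such a crossing MUS $T[k..j]$ satisfies $i-\sqrt{n}<k\le i$, so its left endpoint lies in a window of $\lceil\sqrt{n}\rceil$ positions, and since left endpoints are distinct there are at most $\lceil\sqrt{n}\rceil$ of them (the same bound also follows from the right endpoints). For the \emph{long} MUSs, those with $\ell_s>\sqrt{n}$, the idea is an occurrence-charging argument. Although $M_s$ is unique, its prefix $P_s=T[k_s..j_s-1]$ is a repeat by minimality, so $P_s$ has a second occurrence, yielding an interval $I_s$ in $T$ of length $\ell_s-1>\sqrt{n}-1$. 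I would argue that every fixed position of $T$ is covered by only $O(1)$ of the intervals $I_s$: if three of them overlapped at a position, then near that position one common substring would occur at three mutually overlapping positions, and Fact~\ref{fact:threeoverlap} would force a period incompatible with the $M_s$ being two-sided minimal unique substrings. Bounded coverage gives $\sum_s(\ell_s-1)=O(n)$, and since each long MUS has $\ell_s>\sqrt{n}$ this yields $O(n/\sqrt{n})=O(\sqrt{n})$ long MUSs. Combining the two regimes proves $t=O(\sqrt{n})$. The main obstacle is precisely this charging step: the $P_s$ have different lengths and a repeat may have many second occurrences, so I must choose the occurrence defining $I_s$ canonically (for instance the nearest one, or the one dictated by the structure around $i$) and then extract from three overlapping intervals a single common string with three sufficiently overlapping occurrences to which Fact~\ref{fact:threeoverlap} applies; ruling out the resulting periodic configurations against the uniqueness and minimality of the $M_s$ is where the real work lies.

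For tightness I would exhibit a family with $|\MUS(T,i)|=\Omega(\sqrt{n})$. Aiming for $t=\Theta(\sqrt{n})$ crossing MUSs of lengths $\Theta(t)$, whose planted second occurrences of the prefix and suffix repeats consume $\Theta(t^2)=\Theta(n)$ total length, I would assemble the string from $t$ blocks surrounding the central position together with distinct delimiter characters that pin down the occurrences, so that each of the $t$ target substrings is unique while both of its one-character contractions recur. The delicate point is to verify uniqueness and two-sided minimality simultaneously for all $t$ substrings, that is, that the planted repeats introduce no unintended extra occurrences that would destroy a MUS; a clean block-and-separator design makes this verification routine, and the resulting lower bound matches the upper bound, establishing $|\MUS(T,i)|\in\Theta(\sqrt{n})$.
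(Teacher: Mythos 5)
Your setup (the staircase structure of crossing MUSs and the $O(\sqrt{n})$ bound on \emph{short} MUSs via their left endpoints) is correct, and your plan for the \emph{long} MUSs is the right kind of idea, but the step that carries all the weight --- that each position of $T$ is covered by $O(1)$ of the second-occurrence intervals $I_s$ --- is not proved, and the mechanism you propose for it does not work as stated. If three intervals $I_{s_1},I_{s_2},I_{s_3}$ share a position, the only string common to all three MUSs $M_{s_1},M_{s_2},M_{s_3}$ at controlled offsets is their overlap around position $i$, namely $T[k_{s_3}..j_{s_1}]$, and because of the staircase structure this overlap can be drastically shorter than the intervals themselves (as short as $O(1)$ when $k_{s_3}$ and $j_{s_1}$ are both near $i$). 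Hence three intervals meeting at a point gives you three occurrences of a common string that need not overlap one another at all, so Fact~\ref{fact:threeoverlap} cannot be invoked, and I see no way to salvage a uniform $O(1)$-coverage claim. This is exactly where the paper does something structurally different: its key lemma (Lemma~\ref{lem:u_NOT_overlapped}) does not bound coverage of the whole intervals but shows, after a delicate four-case periodicity analysis, that among the three occurrences of the overlap $u$ sitting inside the alternative occurrences of the truncated suffixes $s_k$, at least two are disjoint. This yields only that the $x$-th and $(x+2)$-th \emph{marked positions} (in sorted order) are at least $|u|$ apart, where $|u|>h-\omega$ \emph{varies with the triple}; the paper then avoids any short/long dichotomy by telescoping $\sum_x (i_{f(x+2)}-i_{f(x)})<2n$ and separately bounding $\sum_x\omega_x$ to extract $h^2=O(n)$ (Lemma~\ref{lem:ik_interval} and the computation following it). Your fixed threshold $\sqrt{n}$ cannot absorb this triple-dependent gap length.

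For tightness, you describe the right target parameters ($\Theta(\sqrt{n})$ MUSs of length $\Theta(\sqrt{n})$ whose one-character contractions are planted elsewhere), but you do not actually exhibit a string or verify uniqueness and two-sided minimality; the verification is not routine, since the planted repeats must not create unintended occurrences and must not themselves destroy the uniqueness of the target substrings. The paper's construction $T_m=ab^{2m}ab^{2m+2}S_{m,1}\cdots S_{m,m-1}$ with $S_{m,k}=ab^kab^{2m-k}$ shows this can be done over a binary alphabet with an explicit, checkable argument; your ``distinct delimiter characters'' would also need the alphabet size accounted for, and in any case the construction must be written down for the lower bound to stand.
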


\subsection{Upper bound}
In this subsection, we fix a string $T$ of length $n$ arbitrarily.
We first prove our key lemma
to our $O(\sqrt{n})$ upper bound for $|\MUS(T,i)|$.
Suppose that $\mus[1] = [i_1 .. i_1+|\mus[1]|-1], \mus[2] = [i_2 .. i_2+|\mus[2]|-1], \mus[3] = [i_3 .. i_3+|\mus[3]|-1]$ are MUSs
that contain the position $i$ and occur in this order.
Let $a_k = T[i_k]$ be the first character of $\mus[k]$
and $s_k = T[i_k+1 .. i_k+|\mus[k]|-1]$ be the rest for each $k\in\{1,2,3\}$.
Since $\mus[3]$ is the rightmost MUS among three MUSs $\mus[1], \mus[2], \mus[3]$,
the overlap of them can be written as $a_3u$ for some string $u \in \Sigma^*$.
Let $q = T[i_1+1..i_3]$ and $p = T[i_2+1..i_3]$
be non-empty prefixes of $s_1$ and $s_2$
that immediately precede the occurrence $[i_3+1.. i_3+|u|]$ of $u$ respectively~(see also Fig.~\ref{fig:u_NOT_overlapped}).
Further let $r = T[i_2+|pu|+1 .. i_2+|M_2|-1]$ be the suffix of $s_2$
that immediately follows the occurrence $[i_2 + |p|+1.. i_2 + |pu|]$ of $u$.
By the definition of MUSs, 
$s_k$ has an occurrence other than $[i_k+1.. i_k+|s_k|]$ for each $k\in\{1,2,3\}$.
\begin{figure}[tb]
  \centering
  \includegraphics[keepaspectratio,width=\linewidth]{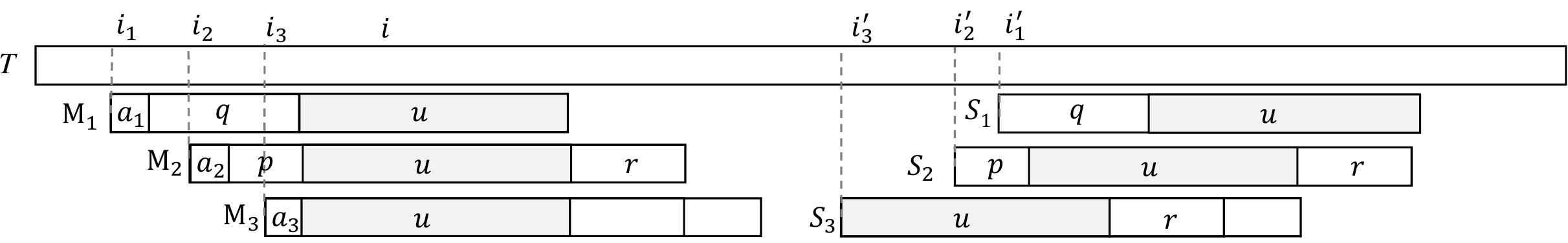}
  \caption{Illustration for Lemma~\ref{lem:u_NOT_overlapped}.
  This depicts one of several possible mutual locations of $S_1,S_2,S_3$.
  }
  \label{fig:u_NOT_overlapped}
\end{figure}
For $s_1, s_2$, and $s_3$, we have following lemma:
\begin{lemma}[key lemma] \label{lem:u_NOT_overlapped}
  Let $S_1 = [i'_1..i'_1+|s_1|-1]$, $S_2 = [i'_2..i'_2+|s_2|-1]$, $S_3 = [i'_3..i'_3+|s_3|-1]$ be any occurrences of 
  $s_1, s_2, s_3$ in $T$ such that
  $i'_1 \neq i_1+1$, $i'_2 \neq i_2+1$, and $i'_3 \neq i_3+1$, respectively.
  Then, for the three occurrences
$U_1=[i'_1+|q| .. i'_1+|q|+|u|-1]$,
$U_2=[i'_2+|p| .. i'_2+|p|+|u|-1]$,
  $U_3=[i'_3..i'_3+|u|-1]$ of
  $u$ in $S_1, S_2, S_3$,
  at least two of them do not overlap (see Fig.~\ref{fig:u_NOT_overlapped}).
\end{lemma}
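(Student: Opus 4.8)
The plan is to argue by contradiction: I assume that all three copies $U_1,U_2,U_3$ of $u$ pairwise overlap, and derive a contradiction with the fact that $M_1,M_2,M_3$ are MUSs. Since $U_1,U_2,U_3$ are three intervals of the common length $|u|$ that pairwise intersect, their starting positions lie inside a window of width less than $|u|$ (equivalently, by the one-dimensional Helly property they share a common position). Hence $u$ occurs at the (at most three) positions $\alpha_1=i'_1+|q|$, $\alpha_2=i'_2+|p|$, $\alpha_3=i'_3$ within a window shorter than $|u|$, which is exactly the configuration to which Fact~\ref{fact:threeoverlap} applies.

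Before invoking periodicity I would record the characters immediately to the left of the three copies. Inside $S_1$ the copy $U_1$ is a suffix preceded by (a copy of) $q$, and inside $S_2$ the copy $U_2$ is preceded by (a copy of) $p$; since $q$ and $p$ both end with $T[i_3]=a_3$ (indeed $p$ is a suffix of $q$), I get $T[\alpha_1-1]=T[\alpha_2-1]=a_3$. For $U_3$ the situation is opposite: $U_3$ is a \emph{prefix} of $S_3$, so $T[\alpha_3-1]=T[i'_3-1]$ is the character preceding the whole occurrence $S_3$ of $s_3$. Because $S_3$ is a non-native occurrence ($i'_3\neq i_3+1$) of $s_3=M_3[2..|M_3|]$ and $M_3=a_3 s_3$ occurs only once, this preceding character cannot be $a_3$ (otherwise $M_3$ would occur a second time); thus $T[\alpha_3-1]\neq a_3$, unless $S_3$ is a prefix of $T$.

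I would then split on the location of $U_3$ among the three copies. If $\alpha_3$ coincides with $\alpha_1$ or $\alpha_2$, or if the three starts are distinct and $\alpha_3$ is not the left-most, then Fact~\ref{fact:threeoverlap}, which equates the characters immediately preceding the second and third occurrences in left-to-right order, forces $T[\alpha_3-1]=a_3$, contradicting the previous paragraph. This disposes of every case except the one in which $U_3$ is strictly the left-most of the three copies. I expect this remaining case to be the main obstacle: there the period supplied by Fact~\ref{fact:threeoverlap} only constrains the window to the right of $\alpha_3$ and cannot be pushed left past $\alpha_3$, so the left-neighbour argument alone no longer yields a contradiction.

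To close this last case I would exploit the two nesting relations built into the construction, namely that $p$ is a suffix of $q$ and that $r$ is a prefix of $t$, where $s_3=ut$ (both $r$ and $t$ begin at position $i_1+|M_1|$, and $M_3$ ends strictly to the right of $M_2$), together with the common period $g$ of the window. Using that $U_1,U_2$ are preceded by $a_3$, the string $a_3u$ --- a proper prefix of the MUS $M_3$, hence a repeat --- reappears at $\alpha_1-1$ and $\alpha_2-1$; propagating the period $g$ and appending the copy of $r$ (a prefix of $t$) that lies to the right of $U_2$ inside $S_2$, I would reconstruct a full second occurrence of $M_3$ (or, symmetrically, of $M_1$ or $M_2$ from its left context), which contradicts uniqueness since such an occurrence starts at a position other than the native $i_3+1$. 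The delicate point, and the step requiring the most careful bookkeeping, is controlling how far the period reaches relative to the lengths $|q|,|p|,|r|,|t|$ so that the reconstructed occurrence genuinely fits inside $T$; this is where the hypotheses that $M_1,M_2,M_3$ share the position $i$ and occur in the stated order enter.
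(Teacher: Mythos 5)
Your first half is sound and matches the paper exactly: assuming all three copies of $u$ pairwise overlap, you observe that $U_1$ and $U_2$ are preceded by $a_3$ (because $p$ is a suffix of $q$ and both end at $T[i_3]=a_3$) while $U_3$ is preceded by $a'_3\neq a_3$, and Fact~\ref{fact:threeoverlap} then forces $U_3$ to be the strictly leftmost of the three copies. This is precisely the paper's opening move.

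The gap is everything after that. You explicitly defer the case where $U_3$ is strictly leftmost, and that case is the bulk of the proof: the paper resolves it by introducing the occurrence $S'_1$ of $pu$ as a suffix of $S_1$ (preceded by $a_2$, hence starting at a position $i''_1\neq i'_2$) and then splitting into four sub-cases on the relative order of $i''_1$, $i'_2$, $i'_3$. Crucially, these sub-cases do \emph{not} all yield the contradiction you propose. Your plan --- propagate the period, append the copy of $r$, and reconstruct a second occurrence of $M_3$ (or $M_2$) --- is essentially the paper's Case~3 ($i'_3=i''_1<i'_2$), where the periods of $u$ and $r$ combine to show $a_2s_2=M_2$ occurs at $i''_1-1$, forcing $i''_1-1=i_2$ and hence $i'_1=i_1+1$, contradicting the hypothesis on $i'_1$. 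But in the other configurations the period does not reach far enough to cover $r$, and the contradictions are of a different nature: when $i'_3$ lies strictly to the right of $i'_2$ or $i''_1$, the substring $a'_3u$ sits inside the periodic string $pu$ and one gets $a_3=a'_3$; when $i'_3$ is strictly left of both, the periodicity of $u$ yields $a_2=a'_2$; and when $i'_3=i'_2<i''_1$, one derives $a_2=a_3$ and concludes that one of the unique strings $a_2s_2$, $a_3s_3$ is a prefix of the other at the same position. You also note but do not resolve the possibility $i'_3=1$ (no preceding character), which the exhaustive case split handles implicitly since in the relevant sub-cases $i'_3$ equals or exceeds a position that must carry $a_2$ or $a'_2$. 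In short, the single uniform ``rebuild a MUS occurrence'' argument does not close the leftmost-$U_3$ case; a case analysis on the positions of $S'_1$, $S_2$, $S_3$, with distinct contradictions per case, is still needed, and you have supplied only one of them in outline.
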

\begin{proof}
  For the sake of contradiction,
  we assume that the three occurrences of $u$ overlap.
  Let $a'_2 = T[i'_2-1]$ and $a'_3 = T[i'_3-1]$ be the characters that precede $S_2$ and $S_3$ respectively.
  By the definition of MUSs, $a_2 \neq a'_2$ and $a_3 \neq a'_3$ hold.
  While the character immediately before the occurrence $U_3$ of $u$ is $a'_3$,
the characters immediately before the occurrences $U_1$ and $U_2$ of $u$ must be $T[i_3] = a_3$.
For the overlapped three occurrences of $u$, due to Fact~\ref{fact:threeoverlap}, the characters immediately
  before two occurrences except the leftmost one are the same. 
  This implies that if $U_3$ is not the leftmost among $U_1, U_2, U_3$,
  then $a_3 = a'_3$, which is contradiction.
  Thus, $U_3$ is the leftmost occurrence among the three occurrences of $u$.

Let $S'_1 = T[i'_1 + |q| - |p| .. i'_1+|s_1|-1]$ be the occurrence of $pu$ as a suffix of $S_1$, 
  and $i''_1 = i'_1 + |q| - |p|$ be the starting position of $S'_1$.
  Let $P_1 = [i''_1 .. i''_1+|p|-1]$ and $P_2 = [i'_2 .. i'_2 + |p|-1]$
  be the occurrences of $p$ as prefixes of $S'_1$ and $S_2$ respectively.
  It holds that $i''_1 \neq i'_2$ since $T[i''_1-1] = a_2 \neq a'_2 = T[i'_2-1]$.
In the following, we consider the four cases depending on $i''_1, i'_2, i'_3$, 
  which are the starting positions of $S'_1,S_2,S_3$ respectively; 
  Case 1. $i'_2 < i'_3$ or $i''_1 < i'_3$;
  Case 2. $i'_3 < i'_2$ and $i'_3 < i''_1$;
  Case 3. $i'_3 = i''_1 < i'_2$;
  Case 4. $i'_3 = i'_2 < i''_1$.

  \textbf{Case 1: $i'_2 < i'_3$ or $i''_1 < i'_3$.}
  We further consider the sub-case where $i'_2 < i''_1$.
  The other sub-case can be shown similarly.
  By assumptions, $i'_2 < i'_3$ holds, and thus, $a'_3u$ that occurs at position $i'_3-1$ is a substring of $pu$~(see also Fig.~\ref{fig:u_case1}).
  Since two occurrences $P_1U_1, P_2U_2$ of $pu$ overlap, $pu$ has period $i''_1 - i'_2$.
  Hence, its substring $a'_3u$ also has period $i''_1 - i'_2$.
  The periodicity implies that $a_3 = u[i''_1-i'_2] = a'_3$, a contradiction.
\begin{figure}[tb]
    \centering
    \includegraphics[keepaspectratio,width=0.65\linewidth]{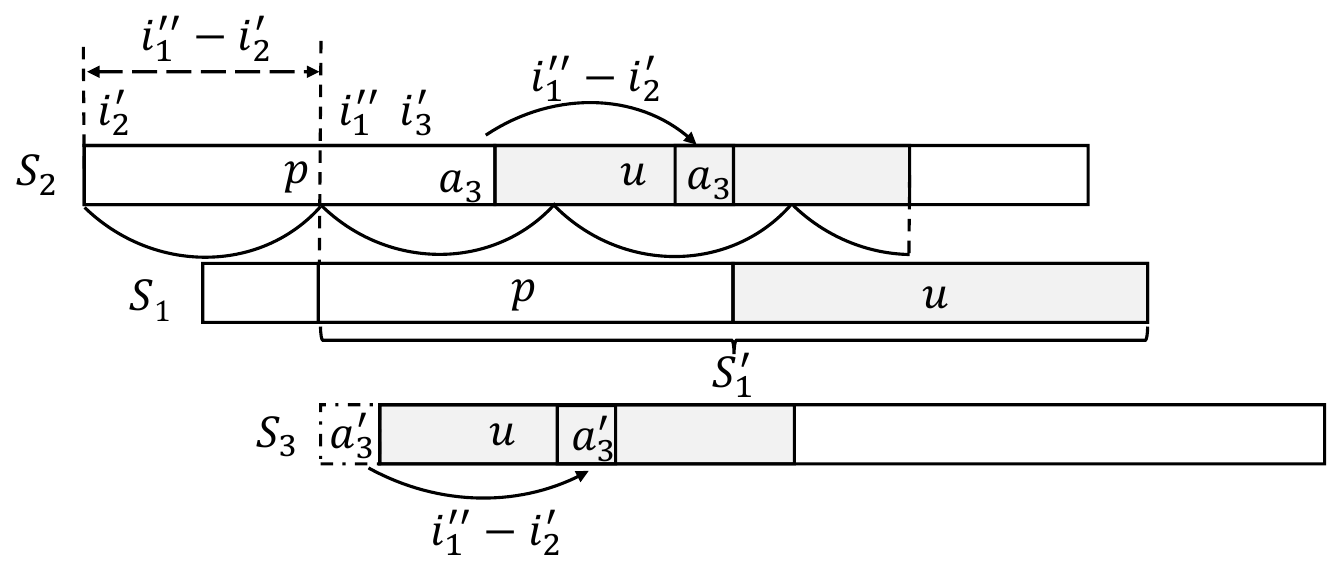}
    \caption{Illustration for a contradiction in Case 1 of the proof of Lemma~\ref{lem:u_NOT_overlapped}.}
\label{fig:u_case1}
  \end{figure}

  \textbf{Case 2: $i'_3 < i'_2$ and $i'_3 < i''_1$.}
  We further consider the sub-case where $i'_3 < i'_2 < i''_1$.
  The other sub-case can be shown similarly.
  As in Case 1,
$pu$ has period $i''_1 - i'_2$.
  Also, its substring $u$ has period $i''_1 - i'_2$.
  Since $i'_3 < i'_2 < i''_1$ holds and three occurrences  $U_1, U_2, U_3$ overlap, $i'_3 \leq i'_2-1 < i''_1-1 < i'_3+|u|-1$.
  This implies that both positions $i'_2-1$ and $i''_1-1$ are within occurrence $U_3$ of $u$~(see also Fig.~\ref{fig:u_case2}).
Since $u$ has period $i''_1 - i'_2$, it holds that $a'_2 = T[i'_2-1] = T[i'_2-1+(i''_1-i'_2)] = T[i''_1-1] = a_2$, a contradiction.
\begin{figure}[tb]
  \centering
  \includegraphics[keepaspectratio,width=\linewidth]{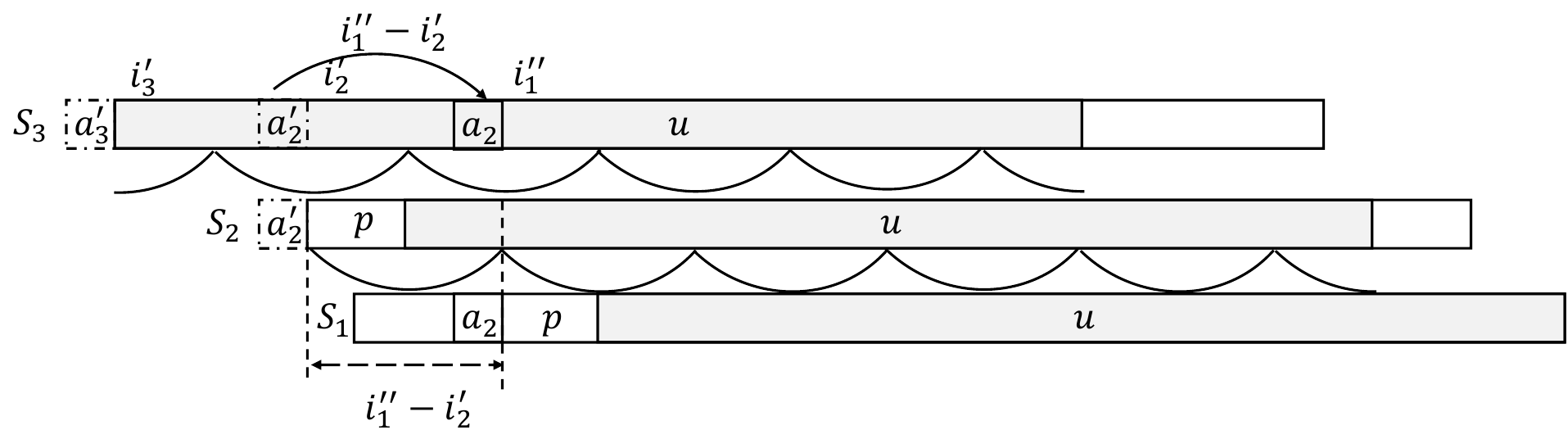}
  \caption{Illustration for a contradiction in Case 2 of the proof of Lemma~\ref{lem:u_NOT_overlapped}.}
\label{fig:u_case2}
  \end{figure}

  \textbf{Case 3: $i''_1 = i'_3 < i'_2$.}
Let us consider periods of $u$ in detail.
Since three overlapping occurrences $U_3, U_1$, and $U_2$ of $u$ appear in this order,
  $u$ has period $\gcd((i''_1+|p|)-i'_3, i'_2+|p|-(i''_1+|p|)) = \gcd(|p|, i'_2-i''_1)$ due to Fact~\ref{fact:threeoverlap}.
  Let $\alpha = \gcd(|p|, i'_2-i''_1)$.
Next, we consider periods of $r$.
  Let $R_2 = [i'_2+|pu| .. i'_2+|s_2|-1]$ and $R_3 = [i'_3+|u| .. i'_3 + |ur| - 1]$ be occurrences of $r$
  that immediately follow the occurrences $U_2$ and $U_3$ of $u$ respectively.
  We consider two sub-cases 3-1 and 3-2 depending on the length of $r$:
\textbf{Case 3-1.}~If $|r| > i'_2-i''_1+|p|$, then
  $U_2$ and $R_3$ overlap, and the length of their intersection is $i'_2-i''_1+|p|$~(see also Fig.~\ref{fig:u_case3}).
  Also, the length of the overlap of occurrences $R_3$ and $R_2$ of $r$ is $|r|-(i'_2-i''_1+|p|)$,
  and thus $r$ has period $i'_2-i''_1+|p|$.
  By the definition of $\alpha = \gcd(|p|, i'_2-i''_1)$, $i'_2-i''_1+|p|$ is a multiple of $\alpha$.
  Thus, the overlap of $U_2$ and $R_3$ forms an integer power of a string of length $\alpha$,
  and $r$ also has period $\alpha$.
  To summarize, string $ur$ also has period $\alpha$,
  since both $u$ and $r$ have period $\alpha$ and their overlap is a multiple of $\alpha$.
\textbf{Case 3-2.}~If $|r| \le i'_2-i''_1+|p|$, then
  $r$ is a prefix of the length-$(i'_2-i''_1+|p|)$ suffix of $u$, which is an integer power of some string of length $\alpha$.
  Therefore, similar to the above, $ur$ has period $\alpha$.

In both sub-cases, $ur$ has period $\alpha$.
Next, let us consider $x = T[i''_1..i'_2+|s_2|-1]$. 
  Since $ur$ is a border of $x$
and $|x| -  |ur| = i'_2-i''_1+|p|$ is a multiple of $\alpha$,
  $x$ has period $\alpha$.
Then, $s_2 = T[i'_2 .. i'_2+|s_2|-1] = T[i''_1 .. i''_1+|s_2|-1]$ holds.
  Since $T[i''_1-1] = a_2$, $T[i''_1-1..i''_1+|s_2|-1] = a_2s_2$ ,which is identical to the second MUS $T[i_2..i_2 + |\mus[2]|-1]$. Since a MUS must be unique in $T$, $i''_1-1 = i_2$ holds.
Then, considering the starting position $i'_1$ of $S_1$,
  we have $i'_1 = i''_1+|p|-|q| = i_2+1 +|p|-|q| = i_1+1$, which contradicts the definition of $i'_1$.
\begin{figure}[tb]
    \centering
    \includegraphics[keepaspectratio,width=\linewidth]{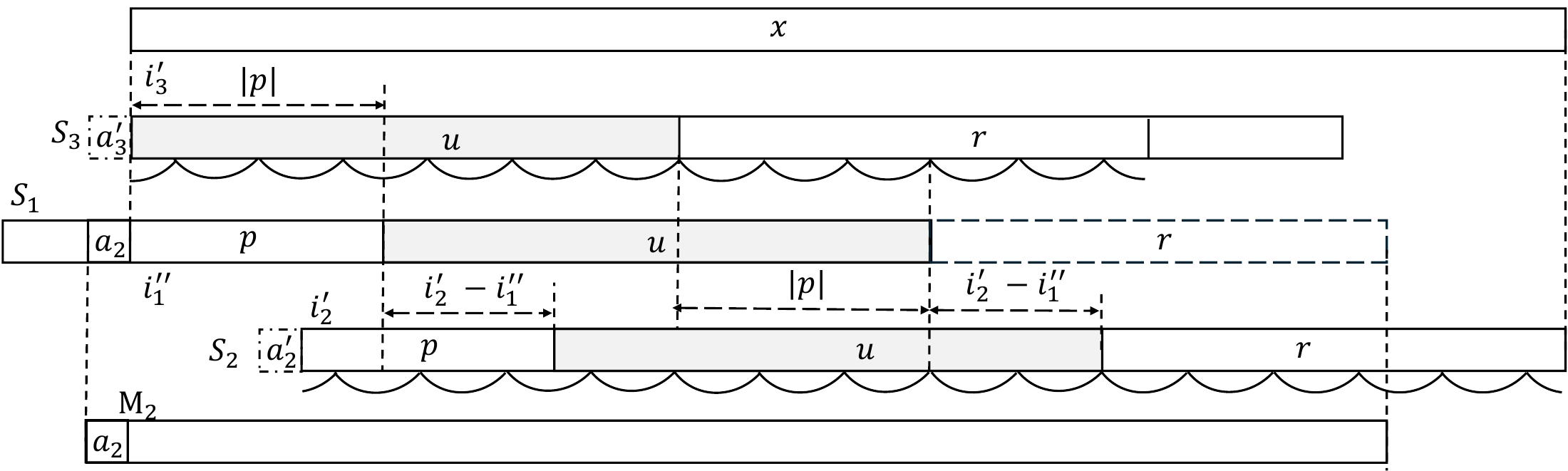}
    \caption{Illustration for Case 3 of the proof of Lemma~\ref{lem:u_NOT_overlapped}.}
\label{fig:u_case3}
  \end{figure}

  \textbf{Case 4: $i'_2 = i'_3 < i''_1$.}
  Since two occurrences $U_3$ and $U_2$ of $u$ overlap, $u$ has period $|p|$~(see also Fig.~\ref{fig:u_case4}).
Also, $T[i''_1 + |p|-1] = a_3$ holds
  since the character that precedes the suffix $u$ of $s_1$ is $a_3$.
Since $U_3$ and $U_1$ overlap and $i'_3 < i''_1$,
  $T[i''_1-1 .. i''_1+|p|-1] = a_2p$ is a substring of $u$, which has period $|p|$.
  Thus, $a_2 = T[i''_1-1] = T[i''_1+|p|-1] = a_3$ holds.
Here, one of $s_2$ and $s_3$ is a prefix of the other since $i'_2 = i'_3$ holds.
  Therefore, one of $a_2s_2$ and $a_3s_3 = a_2s_3$ is a prefix of the other,
  which contradicts that MUSs $a_2s_2$ and $a_3s_3$ are unique in $T$.
  \end{proof}

\begin{figure}[h]
  \centering
  \includegraphics[keepaspectratio,width=0.9\linewidth]{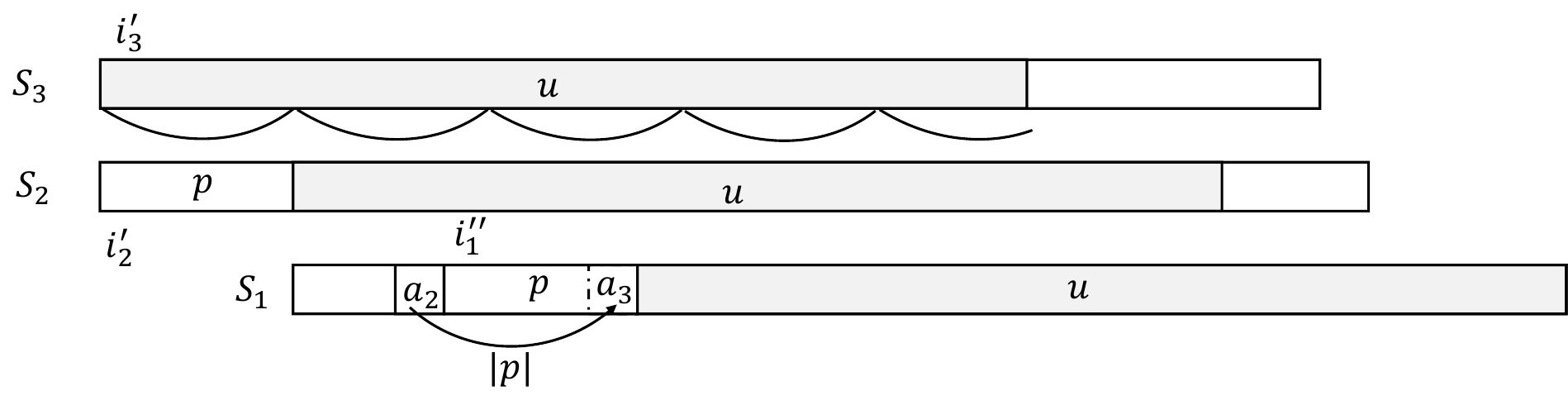}
  \caption{Illustration for a contradiction in Case 4 of the proof of Lemma~\ref{lem:u_NOT_overlapped}.}
\label{fig:u_case4}
  \end{figure}

Based on the key lemma, we estimate an upper bound for $|\MUS(T, i)|$.
Suppose that $h+1$ MUSs contain the same position $i$ in $T$.
Let $\mus[1], \mus[2], \dots , \mus[h+1]$ be such MUSs from left to right.
We focus only on the first $h$ MUSs.
For each $1 \le k \le h$, let $s_k = \mus[k][2..|\mus[k]|]$ be the suffix of $\mus[k]$ of length $|\mus[k]|-1$.
We \emph{mark} the character at position $i$ on $s_k$~(see Fig.~\ref{fig:MUS_marker}).
Let $S_k$ be an occurrence of $s_k$ that is not the suffix of $\mus[k]$ for each $1\le k \le h$.
We denote by $i_k$ the marked position on the occurrence $S_k$ of $s_k$.
Further let $I = \{i_k \mid 1\le k \le h\}$ be the set of marked positions excluding $i$.
We define a function $f: \{1,\ldots, h\} \to \{1,\ldots, h\}$ such that
$f(x) = k$ iff $i_k$ is the $x$th smallest value in $I$.
In other words,
$f(x) = k$ iff $x = |\{i_\ell \in I \mid i_\ell \le i_k\}|$.
\begin{figure}[tb]
  \centering
  \includegraphics[keepaspectratio,width=\linewidth]{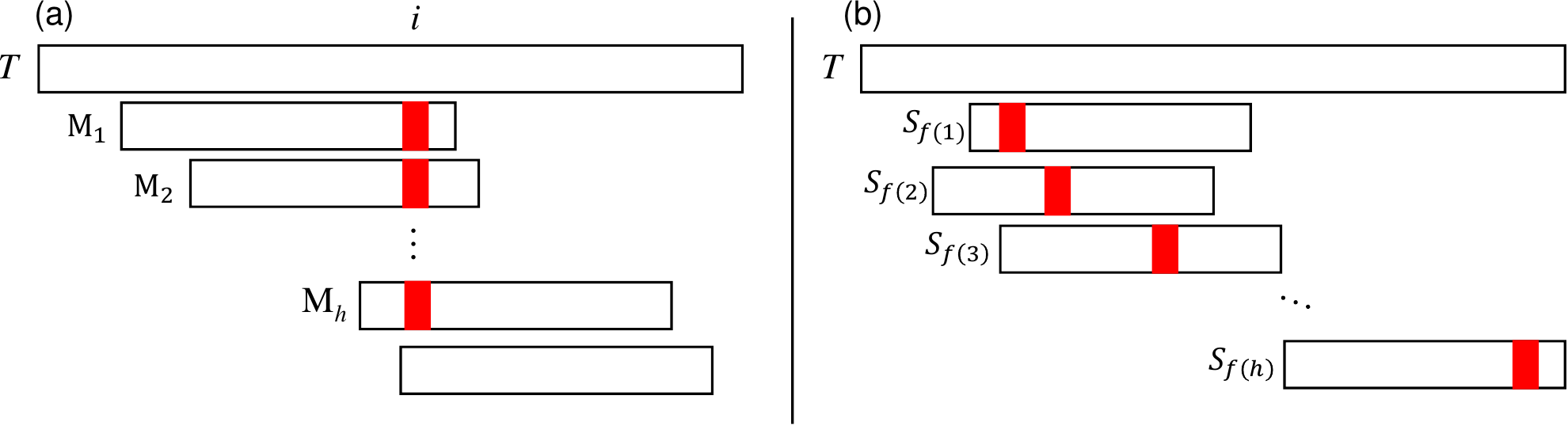}
  \caption{(a) These are $h+1$ MUSs that contain position $i$. Markers are indicated in red.
    (b)
Strings $S_{f(1)}, S_{f(2)}, \ldots, S_{f(h)}$ are sorted by their marked positions.
  }
  \label{fig:MUS_marker}
\end{figure}
For each $1 \le x \le h-2$, we have the following:
\begin{figure}[tb]
  \centering
  \includegraphics[keepaspectratio,width=\linewidth]{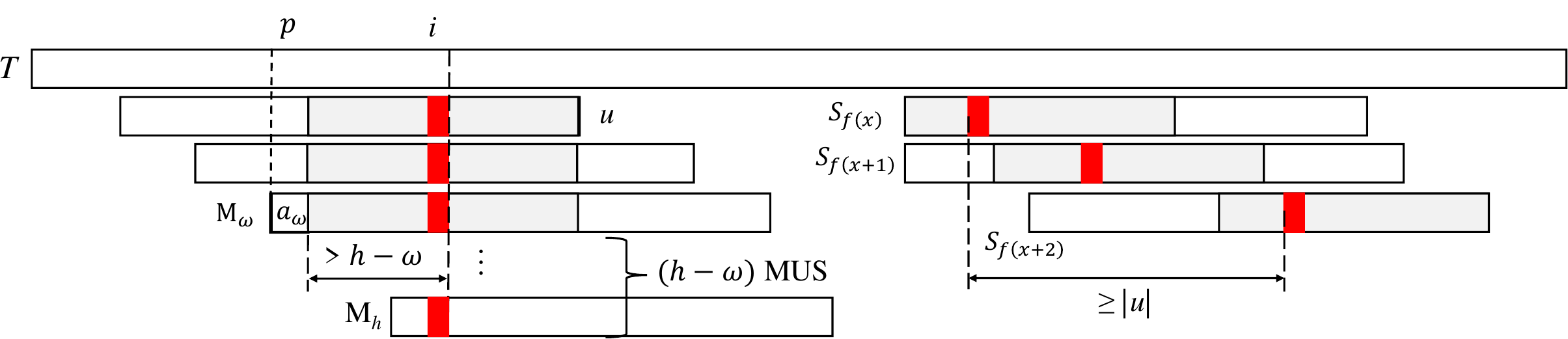}
  \caption{Illustration for the proof of Lemma~\ref{lem:ik_interval}.}
  \label{fig:ik_interval}
\end{figure}
\begin{lemma} \label{lem:ik_interval}
$i_{f(x+2)} - i_{f(x)} > h - \omega$ holds
  where $\omega = \max\{f(x), f(x+1), f(x+2)\}$.
\end{lemma}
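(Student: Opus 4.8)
The plan is to reduce the claim to the key lemma (Lemma~\ref{lem:u_NOT_overlapped}) by translating the three marked positions into the starting positions of three occurrences of a single string $u$, and then to bound $|u|$ from below in terms of $h-\omega$. Fix $x$ and consider the three distinct MUSs with indices $f(x),f(x+1),f(x+2)$. Write $\mu=\min\{f(x),f(x+1),f(x+2)\}$ and $\omega=\max\{f(x),f(x+1),f(x+2)\}$, and for each $1\le k\le h+1$ let $M_k=[b_k..d_k]$ be the (unique) occurrence of $M_k$, so that $b_1<b_2<\cdots<b_{h+1}$ and $d_1<d_2<\cdots<d_{h+1}$ since MUSs do not nest. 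Listing the three chosen MUSs from left to right and applying Lemma~\ref{lem:u_NOT_overlapped} to them, let $a_\omega u$ be their common overlap; since $M_\omega$ is the rightmost and $M_\mu$ the leftmost of the three, this overlap is $[b_\omega..d_\mu]$ and hence $|u|=d_\mu-b_\omega$.

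First I would locate all three marks inside the copies of $u$. Because every MUS in question contains $i$ and we use only the first $h$ MUSs, we have $b_\omega<b_{h+1}\le i\le d_\mu$, so position $i$ lies strictly inside the overlap $a_\omega u$, at one fixed position within $u$ measured from its left end. Matching the setup of Lemma~\ref{lem:u_NOT_overlapped}, the suffix of the leftmost MUS decomposes as $qu$, that of the middle one as $pur$, and that of the rightmost one as $u$ followed by a tail; a short computation with $|q|=b_\omega-b_\mu$ and $|p|=b_\omega-b_{f(x+1)\text{-type index}}$ shows that in each case the mark sits at the common position $i-b_\omega$ inside $u$. Consequently, in each alternative occurrence $S_{f(x)},S_{f(x+1)},S_{f(x+2)}$ the marked position equals the start of the corresponding occurrence $U_{f(x)},U_{f(x+1)},U_{f(x+2)}$ of $u$ plus the fixed constant $i-b_\omega-1$.

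Next I would invoke the key lemma: at least two of $U_{f(x)},U_{f(x+1)},U_{f(x+2)}$ are disjoint, and two disjoint occurrences of a length-$|u|$ string have starting positions differing by at least $|u|$. Since each marked position differs from the start of its $u$-occurrence by the same constant, some pair among $i_{f(x)},i_{f(x+1)},i_{f(x+2)}$ differs by at least $|u|$. The span of three numbers is at least any of their pairwise gaps, and by the definition of $f$ the values $i_{f(x)}$ and $i_{f(x+2)}$ are the smallest and the largest of the three; hence $i_{f(x+2)}-i_{f(x)}\ge|u|$. Finally, to bound $|u|$, I would use that the starts and ends of the $h+1$ MUSs are strictly increasing integers with every start at most $i$ and every end at least $i$: counting gives $d_\mu\ge i+(\mu-1)$ and $b_\omega\le i-(h+1-\omega)$, so $|u|=d_\mu-b_\omega\ge\mu+h-\omega\ge 1+(h-\omega)$. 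Combining the two estimates yields $i_{f(x+2)}-i_{f(x)}\ge|u|>h-\omega$, as claimed.

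The step I expect to be the main obstacle is the first one: verifying from the overlap geometry that all three marks genuinely land inside their respective copies of $u$ at one common position, and that this position matches the occurrences $U_{f(x)},U_{f(x+1)},U_{f(x+2)}$ named in Lemma~\ref{lem:u_NOT_overlapped}. This is precisely where restricting attention to the first $h$ MUSs is essential, since it guarantees $b_\omega<i$ and thus places the mark strictly inside $u$ rather than on the leading character $a_\omega$. Once this alignment is established, the key lemma and the elementary counting bound on starts and ends combine immediately to finish the proof.
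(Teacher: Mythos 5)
Your proof is correct and follows essentially the same route as the paper's: invoke Lemma~\ref{lem:u_NOT_overlapped} to get two non-overlapping occurrences of $u$, observe that the marks sit at a fixed offset inside each copy of $u$ so that $i_{f(x+2)}-i_{f(x)}\ge|u|$, and then lower-bound $|u|$ by counting the starting positions of the MUSs with index greater than $\omega$. Your extra care in verifying that the marks land strictly inside $u$ (using $b_\omega<b_{h+1}\le i$) and your slightly sharper bound $|u|\ge\mu+h-\omega$ are both sound refinements of the same argument.
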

\begin{proof}
  Let $au$ be the overlap of $\mus[f(x)], \mus[f(x+1)]$ and $\mus[f(x+2)]$ where $a \in \Sigma$.
  By Lemma~\ref{lem:u_NOT_overlapped}, 
at least two of the three occurrences of $u$ in $S_{f(x)}, S_{f(x+1)}$, and $S_{f(x+2)}$ do not overlap.
  Thus, marked positions in $S_{f(x)}$ and $S_{f(x+2)}$ are separated by a distance of at least $|u|$~(see also Fig.~\ref{fig:ik_interval}).
  Namely, $i_{f(x+2)} - i_{f(x)} \geq |u|$ holds
Next, we analyze the length of $u$.
  Let $p$ be the starting position of $\mus[\omega]$.
  Since MUSs do not nest and
there are $h+1-\omega$ MUSs that contain $i$ whose starting positions are greater than $p$,
$i-p \geq h-\omega+1$ holds.
  Therefore, $|u| \ge i-p > h-\omega$.
  To summarize, we have 
  $i_{f(x+2)} - i_{f(x)} \ge |u| > h - \omega$.
\end{proof}

Let $\omega_x = \max\{f(x), f(x+1), f(x+2)\}$ for each $1 \le x \le h-2$.
Applying Lemma~\ref{lem:ik_interval} for all $1 \leq x \leq h-2$, we obtain
$\sum_{1\leq x \leq h-2}(h-\omega_x) < \sum_{1\leq x \leq h-2}(i_{f(x+2)}-i_{f(x)}) = i_{f(h)} + i_{f(h-1)} - i_{f(2)} - i_{f(1)} < 2n$.
To estimate a lower bound of $\sum_{1\leq x \leq h-2}(h-\omega_x) = h(h-2) - \sum_{1\leq x \leq h-2}\omega_x$,
we consider an upper bound of $\sum_{1\leq x \leq h-2}\omega_x$.
Since $\omega_x$ is defined as the minimum among three function values,
each value $q$ with $1 \le q \le h$ can be selected as $\omega_x$ at most three times.
Thus, $\sum_{1 \le x \le h-2}\omega_x \le 3\sum_{h - \lceil h/3 \rceil + 1 \leq x \leq h}x < 5h^2/6+3h+3/2$ holds. 
Combining the inequalities, we have
$2n > \sum_{1 \leq x \leq h-2}(h-\omega_x) = h(h-2) - \sum_{1 \leq x \leq h-2}\omega_x > h(h-2) - (5h^2/6+3h+3/2) = h^2/6 + h -3/2$,
which implies $h+1 \in O(\sqrt{n})$.
This concludes the proof of $|\MUS(T, i)| \in O(\sqrt{n})$.

\subsection{Lower bound} \label{sec:lowerbound}

\begin{lemma}\label{lem:lowerbound}
  There exists an infinite family of strings $T_m$ that satisfies \linebreak
  $|\MUS(T_m, p)| \in \Omega(\sqrt{|T_m|})$
  for a position $p$ in $T_m$.
\end{lemma}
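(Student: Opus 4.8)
The plan is to exhibit, for each integer $k \ge 2$, a string $T_k$ (this $k$ plays the role of the index $m$) that contains $k$ distinct MUSs all sharing one position, while $|T_k| = \Theta(k^2)$; this immediately yields $|\MUS(T_k,p)| \ge k = \Omega(\sqrt{|T_k|})$. The guiding idea comes from the staircase structure forced in the upper bound: since MUSs do not nest, $k$ MUSs through a common position must have strictly increasing left and right endpoints, and the cheapest way to pack $k$ of them through one point is to take $k$ equal-length MUSs at consecutive starting positions. Concretely, I would take a \emph{core} $C = c_1 c_2 \cdots c_{2k-1}$ consisting of $2k-1$ pairwise distinct characters, and designate the $k$ length-$k$ windows $M_t = C[t\,..\,t+k-1]$ for $t = 1, \ldots, k$ as the intended MUSs. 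Their intervals $[t\,..\,t+k-1]$ have common intersection exactly the single position $p = k$, so every $M_t$ contains $p$.

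Because the characters of $C$ are distinct, each $M_t$ is automatically unique inside $C$; what remains is to force the two length-$(k-1)$ proper substrings of each $M_t$ to occur a second time, which is exactly the MUS condition $\occ_{T}(w[1..|w|-1]) \ge 2$ and $\occ_{T}(w[2..|w|]) \ge 2$. The length-$(k-1)$ substrings needed are $C[s\,..\,s+k-2]$ for $s = 1, \ldots, k+1$, so I would append to $C$ a gadget $\#\,C[1\,..\,k-1]\,\#\,C[2\,..\,k]\,\#\cdots\#\,C[k+1\,..\,2k-1]\,\#$, where $\#$ is a fresh separator not occurring in $C$. Each such block supplies the missing second occurrence of one length-$(k-1)$ substring, so both one-character-shorter substrings of every $M_t$ now repeat, making each $M_t$ a genuine MUS.

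The verification then proceeds in four short steps: (i) every $M_t$ is still unique in the whole string $T_k$, because any length-$k$ window that is not one of the original $M_t$ either straddles a copy of $\#$ (and $\#\notin C$) or tries to fit inside an appended block of length only $k-1<k$; (ii) each $C[s\,..\,s+k-2]$ now occurs exactly twice, once in $C$ and once in its appended block, so both proper substrings of $M_t$ are repeats; (iii) hence $M_1,\dots,M_k$ are $k$ distinct MUSs, all containing position $p=k$; and (iv) the length is $|T_k| = (2k-1) + (k+1)k + \Theta(1) = \Theta(k^2)$, so $k = \Theta(\sqrt{|T_k|})$ and $|\MUS(T_k,p)| \ge k = \Omega(\sqrt{|T_k|})$.

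The step I expect to require the most care is (i), ruling out accidental second occurrences of the $M_t$ after the duplication gadget is appended: I must check that no length-$k$ factor spanning the core/gadget junction, and none lying inside a gadget block, coincides with any $M_t$, which the separators $\#$ and the block length $k-1$ are designed to guarantee. A secondary issue is the alphabet: the construction above uses $\Theta(\sqrt{|T_k|})$ distinct symbols, which is permitted by the statement as written; if a constant alphabet were insisted upon, the obstacle would be to replace the distinct symbols by short codewords without inflating the length by more than a constant factor, since a naive $O(\log k)$-bit encoding would weaken the bound to $\Omega(\sqrt{n/\log n})$.
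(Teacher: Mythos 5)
Your construction is correct, but it takes a genuinely different route from the paper's. The paper works over the binary alphabet $\{a,b\}$: it sets $T_m = ab^{2m}ab^{2m+2}S_{m,1}S_{m,2}\cdots S_{m,m-1}$ with $S_{m,k}=ab^kab^{2m-k}$, and shows that the $m-2$ strings $b^iab^{2m-i+1}$ (for $2\le i\le m-1$) are MUSs all containing position $p=2m+4$; there, the blocks $S_{m,i-1}$ and $S_{m,i}$ play exactly the role of your duplication gadget, supplying the second occurrences of the two length-$(2m+1)$ proper substrings of $b^iab^{2m-i+1}$. Your version replaces the runs of $b$'s by a core of $2k-1$ pairwise distinct characters and re-injects each boundary substring via explicit separator-delimited blocks. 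Structurally both proofs follow the same recipe --- a family of $\Theta(k)$ equal-length unique windows sliding through one position, plus $\Theta(k)$ appended blocks of length $\Theta(k)$ that make the one-character-shorter prefixes and suffixes repeat, for total length $\Theta(k^2)$ --- and your verification steps (i)--(iv) all go through: uniqueness of each $M_t$ is immediate from the distinct core characters and the fact that no $\#$-free window of length $k$ fits in a block of length $k-1$, and each $C[s\,..\,s+k-2]$ for $1\le s\le k+1$ indeed acquires exactly one extra occurrence. What the paper's construction buys is a constant-size (indeed binary) alphabet: your witness needs $|\Sigma|=\Theta(\sqrt{n})$, and, as you correctly flag, a naive re-encoding over a fixed alphabet would degrade the bound to $\Omega(\sqrt{n/\log n})$. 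Since the lemma as stated imposes no restriction on $|\Sigma|$, your proof does establish the claim as written; the binary construction is simply the stronger witness, and it also matches the upper bound in the regime of constant alphabets where such lower bounds are usually wanted.
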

\begin{proof}
Consider the string
  $$T_m = ab^{2m}ab^{2m+2} S_{m,1}  S_{m,2} \cdots S_{m,k} \cdots S_{m,m-1},$$
  where
  $S_{m,k} = ab^kab^{2m-k}$ for each $1\leq k\leq m-1$.
  Let $p = 2m+4$. See also Fig.~\ref{fig:lowerex} for a concrete example.

  For each $2 \leq i \leq m-1$,
  consider the substring
  $$T_m[p-(2+i)..p+(2m-1-i)] = b^i a b^{2m-i+1}$$
  of $T_m$. 
  It is clear that each $T_m[p-(2+i)..p+(2m-1-i)]$ contains position $p$.

  Now we show that each $b^i a b^{2m-i+1}$ is a MUS of $T_m$.
  First, $b^i a b^{2m-i+1}$ is unique in $T_m$.
  Second, $b^{i-1} a b^{2m-i+1}$ is a repeat in $T_m$
  since $S_{m,i-1}[2..2m+2]= b^{i-1} a b^{2m-i+1}$.
  Third, $b^i a b^{2m-i}$ is a repeat in $T_m$
  since $S_{m,i}[2..2m+2] = b^i a b^{2m-i}$.
  Thus, $\{b^i a b^{2m-i+1} \mid 2 \leq i \leq m-1\} \subset \MUS(T_m, p)$.
  Since $|T_m| = 2m^2 + 4m + 2 \in \Theta(m^2)$, $|\MUS(T_m, p)| = \Omega(\sqrt{|T_m|})$ holds.
\end{proof}

\begin{figure}[tb]
  \centering
  \includegraphics[keepaspectratio,width=0.9\linewidth]{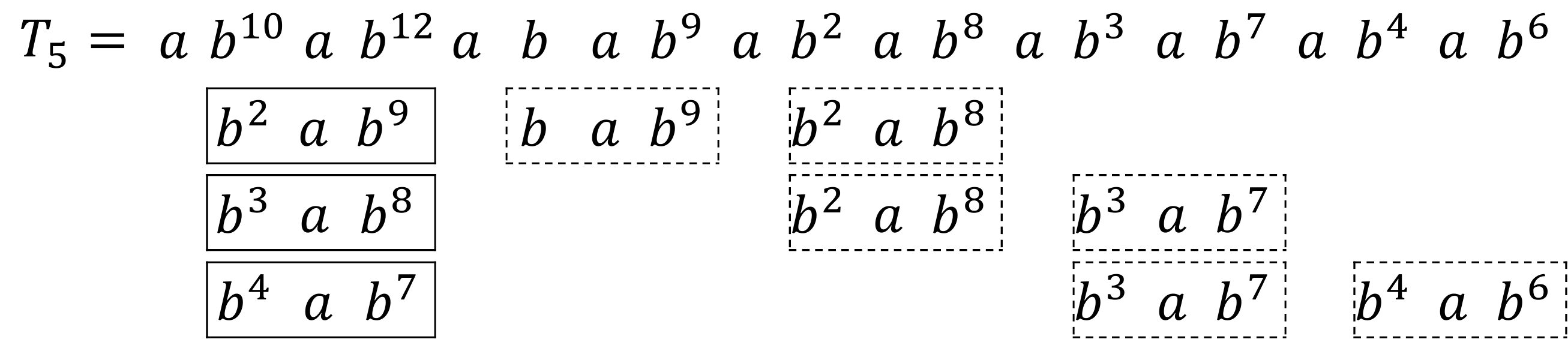}
  \caption{An example of $T_m$ for $m=5$ in the proof of Lemma~\ref{lem:lowerbound}.
  There are $m-2 = 3$ MUSs $b^2 a b^9$, $b^3 a b^8$, and $b^4 a b^7$ that contain the position $2m+4 = 14$.}
\label{fig:lowerex}
\end{figure}

To summarize, we have shown Theorem~\ref{theo:upper_bound_crossing_MUS}.

\section{Conclusions and further work} \label{sec:conclusions}

In this paper, we presented a tight $\Theta(\sqrt{n})$ bound on the number
$|\MUS(T,i)|$ of the MUSs that contains a position $i$ in any string $T$ of length $n$.

One of our future work includes analysis of the sensitivity of MUSs, which is the worst-case increase in the number $|\MUS(T)|$ of MUSs after performing a single-character edit operation on $T$.
Our $\Omega(\sqrt{n})$ lower bound instance for $|\MUS(T,i)|$ in Section~\ref{sec:lowerbound}
already gives us an $\Omega(\sqrt{n})$ lower bound for the sensitivity of MUSs
in both additive and multiplicative senses.
We conjecture that the worst-case additive/multiplicative sensitivity of $|\MUS(T)|$ is also $O(\sqrt{n})$.
Our $O(\sqrt{n})$ upper bound for $|\MUS(T,i)|$
presented in this paper partly contributes to
analyzing an upper bound of the sensitivity of $|\MUS(T)|$,
since the number of new MUSs containing the edit position $i$ is equal to $|\MUS(T,i)|$. 
What remains is to analyze the number 
$|\MUS(T) \setminus (\MUS(T,i) \cup \MUS(T'))|$
of new MUSs which do \emph{not} contain the edited position $i$,
where $T'$ is the string before edit.
 
\section*{Acknowledgements}
This work was supported by JST BOOST Grant Number JPMJBS2406~(HF), and
by JSPS KAKENHI Grant Numbers JP24K20734~(TM), JP23K24808, and JP23K18466~(SI).
We thank Hideo Bannai and Haruki Umezaki for discussions.

\bibliographystyle{abbrv}
\bibliography{ref}

\end{document}